\documentclass[12pt]{amsart}

\usepackage{amsmath,amsthm,amssymb,dsfont,enumerate,tikz,hyperref}
\usepackage{setspace}
\usepackage[margin=1.25in]{geometry}

\newtheorem{theorem}{Theorem}[section]
\newtheorem{proposition}[theorem]{Proposition}

\theoremstyle{definition}

\newtheorem{remark}[theorem]{Remark}

\numberwithin{equation}{section}
\numberwithin{figure}{section}


\renewcommand{\subset}{\subseteq}
\renewcommand{\hat}{\widehat}
\renewcommand{\epsilon}{\varepsilon}

\def\<{\langle}
\def\>{\rangle}
\def\({\Big(}
\def\){\Big)}
\def\C{\mathbb{C}}

\def\R{\mathbb{R}}

\def\T{\mathbb{T}}
\def\Z{\mathbb{Z}}


\title{Elementary $L^\infty$ error estimates for super-resolution de-noising}

\author{Weilin Li}

\subjclass[2010]{42A15, 90C25, 94A12} 

\keywords{Super-resolution, convex optimization, convex regularization, total variation}

\address{Norbert Wiener Center, Department of Mathematics, University of Maryland, College Park, MD 20742, USA}
\email{wl298@math.umd.edu}

\begin{document}
	
\maketitle

\begin{abstract}
This paper studies the problem of recovering a discrete complex measure on the torus from a finite number of corrupted Fourier samples. We assume the support of the unknown discrete measure satisfies a minimum separation condition and we use convex regularization methods to recover approximations of the original measure. We focus on two well-known convex regularization methods, and for both, we establish an error estimate that bounds the smoothed-out error in terms of the target resolution and noise level. Our $L^\infty$ approximation rate is entirely new for one of the methods, and improves upon a previously established $L^1$ estimate for the other. We provide a unified analysis and an elementary proof of the theorem.
\end{abstract}

%
%

\section{Introduction}

\subsection{Overview and Contributions}
Super-resolution techniques are concerned with the recovery of high resolution features from coarse observations, and can be employed to capture information beyond the inherent resolution limit of the measurement system. Applications of super-resolution include microscopy \cite{lindberg2012mathematical}, astronomy \cite{puschmann2005super}, neuroscience \cite{rieke1999spikes}, medical imaging \cite{greenspan2009super}, and geophysics \cite{khaidukov2004diffraction}. 

While there are numerous empirical results on super-resolution, the theory is still in its infancy. Cand\`{e}s and Fernandez-Granda \cite{candes2014towards} introduced a super-resolution model where the unknown information is a discrete and periodic measure whose support set satisfies a minimum separation condition. They proved that such a measure can be uniquely recovered from a finite number of consecutive Fourier samples by solving a convex minimization problem. Several other papers \cite{de2012exact, tang2013compressed, aubel2015theory, duval2015exact} have addressed variations of this model, but it is also possible to study the super-resolution of non-discrete measures \cite{benedetto2016super}. 

The literature also focused on the closely related model where the low resolution data is corrupted by additive noise. This situation is important because in applications of the theory, there might be noise due to measurement error, data corruption, or quantization. Under the same minimum separation framework, several papers \cite{candes2013super, fernandez2013support, bhaskar2013atomic, tang2015near, azais2015spike, duval2015exact} used one of two important convex regularization methods, which we shall call Problem (\ref{SR1}) and (\ref{SR2}), to obtain approximations of the original measure. 

We also adopt the minimum separation model, but unlike the aforementioned papers, we address both convex regularization methods in a \emph{unified} manner. To do this, we show that both methods produce measures that satisfy two (fairly weak) inequalities. We prove that any measure enjoying these properties approximates the unknown measure in a natural sense, and in particular, we can bound the error in terms of the target resolution and noise level. We prove this result using well-known techniques, but we combine the pieces in a different and more efficient manner, resulting in a \emph{significantly simpler} argument. Our $L^\infty$ error estimate is entirely new for Problem (\ref{SR2}) and improves upon a previously established $L^1$ result for Problem (\ref{SR1}), derived by Cand\`{e}s and Fernandez-Granda \cite{candes2013super}.

\subsection{Background}

We first introduce some notation and discuss prior work on the noiseless case. While our results generalize to higher dimensions, for simplicity, we focus on the one-dimensional case. 

Let $M(\T)$ be the space of complex Radon measures on the torus group $\T=\R/\Z$. For $\mu\in M(\T)$, let $|\mu|$ be its variation, $\hat\mu$ be its Fourier transform, and $\|\mu\|$ be its total variation.
For any integer $M>0$, let $\Lambda_M=\{-M,-M+1,\dots,M\}$ and define the projection $P_M\colon M(\T)\to C(\T;\Lambda_M)$ by
\[
P_M\mu(x)
=\sum_{m=-M}^M \hat\mu(m) e^{2\pi imx}, 
\]
where $C(\T;\Lambda_M)$ is the space of trigonometric polynomials of degree $M$, i.e.,
\[
C(\T;\Lambda_M)
=\Big\{f\in C^\infty(\T)\colon f(x)=\sum_{m=-M}^M a_m e^{2\pi imx}, \ a_m\in\C\Big\}. 
\]
We say the discrete set $S=\{s_j\}_{j=1}^J$ satisfies the $\Lambda_M$-\emph{minimum separation condition} if
\begin{equation}
	\label{min}
	\min_{1\leq j<k\leq J} |s_j-s_k|\geq \frac{2}{M},
\end{equation}
where $|\cdot|$ is the distance on $\T$. Let $M(\T,\Lambda_M)$ be the set of discrete measures on $\T$ whose support satisfies the $\Lambda_M$-minimal separation condition.

Cand\`{e}s and Fernandez-Granda \cite[Theorem 1.2]{candes2014towards} proved that if $\mu_0\in M(\T,\Lambda_M)$ for $M\geq 128$, then $\mu_0$ is the unique solution to the \emph{super-resolution problem},
\begin{equation}
\label{SR}
\tag{SR}
\inf \|\mu\|  
\quad \text{subject to}\quad
\mu\in M(\T)
\quad\text{and}\quad
P_M\mu=P_M \mu_0. 
\end{equation}
Their proof requires the assumption that $M\geq 128$ and it is unknown whether the theorem holds for all values of $M>0$. Further, the conclusion of their theorem still holds if we replace the numerical constant 2 in (\ref{min}) with a smaller constant and impose a stronger condition on $M$. For example, the conclusion holds if the 2 is replaced with 1.26 provided that $M\geq 10^3$ \cite[Theorem 2.2]{fernandez2016super}. 

As previously mentioned, we are concerned with the noisy case. For this model, instead of observing the noiseless data $P_M\mu_0$, suppose we are given the corrupted data, $P_M (\mu_0+\eta)$. The papers \cite{candes2013super,fernandez2013support} obtained an approximation of $\mu_0$ by solving the constrained minimization problem,
\begin{equation}
\label{SR1}
\tag{SR$_{\delta}$}
\inf \|\mu\|  
\quad \text{subject to}\quad
\mu\in M(\T)
\quad\text{and}\quad
\|P_M(\mu-\mu_0-\eta)\|_{L^2}\leq \delta,
\end{equation}
where $\delta>0$ can be freely chosen. On the other hand, the papers \cite{bhaskar2013atomic, tang2015near, azais2015spike, duval2015exact} studied the closely related unconstrained minimization problem,
\begin{equation}
\label{SR2}
\tag{SR$_{\tau}$}
\inf \(\frac{1}{2}\|P_M(\mu-\mu_0-\eta)\|_{L^2}^2+\tau\|\mu\|\)
\quad\text{subject to}\quad
\mu\in M(\T),
\end{equation} 
where $\tau>0$ can also be freely chosen. This problem is a special case of Tikhonov regularization.

Using standard weak-$\ast$ compactness arguments, it is not difficult to show that Problems (\ref{SR}), (\ref{SR1}), and (\ref{SR2}) are well-posed, i.e., the infimum in the three minimization problems can be replaced with the minimum. Further, appropriate dual formulations of all three problems can be recast as semi-definite programs, see \cite{candes2014towards,candes2013super,tang2015near}. 

The most important question in the study of regularization methods is to determine if the regularized solutions approximate the noiseless solution in some suitable sense. Suppose $\mu_\delta$ and $\mu_\tau$ are solutions to Problems (\ref{SR1}) and (\ref{SR2}), respectively. Intuitively speaking, we expect that $\mu_\delta$ and $\mu_\tau$ converge to $\mu_0$ if the parameters $\delta$ and $\tau$ are chosen appropriately depending on the noise level and the noise level tends to zero. This intuition is somewhat correct, since it is possible to show convergence for a subsequence and in the weak-$\ast$ sense.  

Such convergence statements are qualitative, whereas we want a quantitative bound. This leads us to the question: What is a natural way of quantifying the errors, $\mu_\delta-\mu_0$ and $\mu_\tau-\mu_0$? Burger-Osher \cite{burger2004convergence} argued that, since Tikhonov regularization is achieved in the weak-$\ast$ topology, it would be surprising if it is possible to bound the error in the total variation norm. Since Problem (\ref{SR2}) is a special case of Tikhonov regularization and is similar to Problem (\ref{SR1}), it is reasonable that the same principle applies. Numerical results have shown that the supports of $\mu_0$, $\mu_\tau$, and $\mu_\delta$ can be different \cite{candes2013super,duval2015exact}, which further supports this heuristic. Thus, it appears impossible to bound $\|\mu_\delta-\mu_0\|$ and $\|\mu_\tau-\mu_0\|$ in terms of the noise level. 

Since super-resolution is concerned with the recovery of fine details from coarse data, it is reasonable to bound $\mu_\delta-\mu_0$ and $\mu_\tau-\mu_0$ at small scales. Cand\`{e}s and Fernandez-Granda \cite{candes2013super} argued that it suffices to control smoothed-out errors at a certain resolution. For a kernel $K$, the smoothed out errors are $K*(\mu_\delta-\mu_0)$ and $K*(\mu_\tau-\mu_0)$.

\subsection{Results}

We are primarily concerned with the solutions to Problems (\ref{SR1}) and (\ref{SR2}). In order to avoid addressing each method separately, we introduce the following definition. We say $\mu\in M(\T)$ is a \emph{$(\epsilon,\Lambda_M)$-approximation} of $\mu_0\in M(\T)$ if
\begin{equation}
	\|\mu\|\leq \|\mu_0\|+2\epsilon
	\quad\text{and}\quad
	\|P_M(\mu-\mu_0)\|_{L^2}\leq 2\epsilon. 
\end{equation}
The numerical constant 2 that appears in both inequalities is unimportant; our theorem still holds for any other sufficiently large constant. Propositions \ref{prop0} and \ref{prop1} show that solutions to either of the convex problems are $(\epsilon,\Lambda_M)$-approximations of $\mu_0\in M(\T;\Lambda_M)$, where $\epsilon$ depends on the noise. 

\begin{theorem}
	\label{thm1}
	There exists a constant $C>0$ such that the following hold. Suppose $\mu_0\in M(\T;\Lambda_M)$ for an integer $M\geq 128$ and $\mu$ is a $(\epsilon,\Lambda_M)$-approximation of $\mu_0$. For any twice differentiable $K$ with $K''\in L^\infty(\T)$, we have 
	\begin{equation}
	\label{eq0}
	\|K*(\mu-\mu_0)\|_{L^\infty}
	\leq C\epsilon \big(\|K\|_{L^\infty} + M^{-1}\|K'\|_{L^\infty} + M^{-2}\|K''\|_{L^\infty}\big). 
	\end{equation}
\end{theorem}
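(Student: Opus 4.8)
The plan is to fix $x\in\T$, write $\nu=\mu-\mu_0$ and $h(y)=K(x-y)$, and reduce the theorem to a pointwise scalar estimate for
\[
K*(\mu-\mu_0)(x)=\int_\T h\,d\nu=\int_\T (h-p)\,d\nu+\int_\T p\,d\nu,
\]
where $p\in C(\T;\Lambda_M)$ is a carefully chosen trigonometric polynomial. I would take $p$ to be a Hermite interpolant of $h$ at the nodes $S=\{s_j\}=\supp\mu_0$ matching both values and first derivatives, so that $p(s_j)=h(s_j)=K(x-s_j)$ and $p'(s_j)=h'(s_j)=-K'(x-s_j)$, while keeping $\|p\|_{L^\infty}\lesssim \|K\|_{L^\infty}+M^{-1}\|K'\|_{L^\infty}$. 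These two matching conditions are exactly what make the atoms of $\mu_0$ and the first-order behaviour of $\nu$ near $S$ invisible to $\int(h-p)\,d\nu$, and the three quantities $\|K\|_{L^\infty}$, $M^{-1}\|K'\|_{L^\infty}$, $M^{-2}\|K''\|_{L^\infty}$ will each enter through exactly one term of the split.

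The polynomial term is the easy half and uses only the $L^2$ hypothesis. Since $\deg p\le M$, the pairing $\int_\T p\,d\nu$ only sees the frequencies in $\Lambda_M$, so Parseval and Cauchy--Schwarz give
\[
\Big|\int_\T p\,d\nu\Big|\le \|p\|_{L^2}\,\|P_M\nu\|_{L^2}\le 2\epsilon\,\|p\|_{L^2}\le 2\epsilon\,\|p\|_{L^\infty}\lesssim \epsilon\,\big(\|K\|_{L^\infty}+M^{-1}\|K'\|_{L^\infty}\big).
\]

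For $\int(h-p)\,d\nu$ I would activate the total variation hypothesis $\|\mu\|\le\|\mu_0\|+2\epsilon$ through the Cand\`{e}s--Fernandez-Granda dual certificate $q_0\in C(\T;\Lambda_M)$ with $|q_0|\le 1$, $q_0(s_j)$ equal to the phase of the atom of $\mu_0$ at $s_j$, and the quadratic lower bound $1-|q_0(y)|\gtrsim \min(M^2\,d(y,S)^2,1)$, all of which are available precisely because $M\ge 128$ and $S$ obeys (\ref{min}). Testing $\|\mu\|$ against $\bar q_0$ and using that $\bar q_0$ is again degree $\le M$ (so that $|\int \bar q_0\,d\nu|\le 2\epsilon$ by the estimate above) yields the mass-localization bound $\int_\T \min(M^2 d(y,S)^2,1)\,d|\mu|(y)\lesssim \epsilon$. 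Splitting $\T$ into the disjoint balls $B_j=\{y:d(y,s_j)\le 1/M\}$ and the far set $F=\T\setminus\bigcup_j B_j$, this controls both $|\mu|(F)\lesssim\epsilon$ and $\sum_j\int_{B_j} d(y,s_j)^2\,d|\mu|(y)\lesssim \epsilon/M^2$. On $F$ we have $\nu=\mu$ (since $\supp\mu_0\subset\bigcup_j B_j$), so $|\int_F(h-p)\,d\nu|\le\|h-p\|_{L^\infty}\,|\mu|(F)\lesssim\epsilon(\|K\|_{L^\infty}+\|p\|_{L^\infty})$. On each $B_j$ the atom of $\mu_0$ at $s_j$ contributes nothing because $(h-p)(s_j)=0$, and a second-order Taylor expansion using $(h-p)(s_j)=(h-p)'(s_j)=0$ gives $|h-p|\le \tfrac12(\|h''\|_{L^\infty}+\|p''\|_{L^\infty})\,d(y,s_j)^2$ on $B_j$; combined with the weighted near mass and Bernstein's inequality $\|p''\|_{L^\infty}\le(2\pi M)^2\|p\|_{L^\infty}$ this bounds the near contribution by $\lesssim \epsilon\,(M^{-2}\|K''\|_{L^\infty}+\|p\|_{L^\infty})$. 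Summing the three pieces and taking the supremum over $x$ gives (\ref{eq0}).

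The main obstacle is the construction of $p$: I need a Hermite interpolation lemma on $\T$ guaranteeing, for nodes satisfying (\ref{min}) with $M\ge 128$, a degree-$M$ polynomial matching prescribed values $\{v_j\}$ and derivatives $\{w_j\}$ with $\|p\|_{L^\infty}\lesssim \max_j|v_j|+M^{-1}\max_j|w_j|$. This is the well-conditioned interpolation underlying the Cand\`{e}s--Fernandez-Granda certificate (one writes $p=\sum_j \alpha_j\psi(\cdot-s_j)+\beta_j\psi'(\cdot-s_j)$ for a Fej\'er-type kernel $\psi$ and shows the coefficient system is diagonally dominant), and establishing the conditioning together with the $M^{-1}$ derivative scaling is the only genuinely technical point; everything else is Parseval, Taylor, and Bernstein. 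Once $\|p\|_{L^\infty}\lesssim\|K\|_{L^\infty}+M^{-1}\|K'\|_{L^\infty}$ is secured, the bounds on $\|p\|_{L^2}$ and $\|p''\|_{L^\infty}$ follow for free and all constants collapse into a single $C$.
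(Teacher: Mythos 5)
Your proposal is correct and follows essentially the same route as the paper: your Hermite interpolant $p$ is exactly the polynomial $f$ of Proposition \ref{prop3} applied to $a_j=K(x-s_j)$, $b_j=-K'(x-s_j)$, your $L^2$ pairing bound for degree-$M$ polynomials is inequality (\ref{eq5}), and your sign-certificate localization of the mass of $\mu$ (far mass $\lesssim\epsilon$, weighted near mass $\lesssim\epsilon M^{-2}$) is Proposition \ref{prop2} in an equivalent formulation. The only cosmetic differences are that you merge the paper's two-step comparison (Taylor polynomial of $K(x-\cdot)$ first, then the interpolant) into a single second-order Taylor bound for $h-p$ at the nodes, and you invoke Bernstein's inequality for $\|p''\|_{L^\infty}$ where the paper bounds $\|f''\|$ directly from the explicit kernel construction.
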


\begin{remark}
	\label{remark1}
	Since we are given noisy observations of $\mu_0$ up to frequency $M$ (equivalently, at scale $1/M$) and super-resolution is concerned with the recovery of fine details, we are particularly interested in quantifying the error $\mu-\mu_0$ at scale $1/N$, for integers $N>M$. There are two natural avenues for a defining a kernel $K_N$ that corresponds to a function of scale $1/N$.
	\begin{enumerate}[(a)]
		\item 
		The first is in the Fourier domain. Let $K_N\in C(\T;\Lambda_N)$. Important examples include the Dirichlet and Fej\'{e}r kernels. By Bernstein's inequality for trigonometric polynomials, we have
		\[
		\|K_N''\|_{L^\infty}
		\leq N\|K_N'\|_{L^\infty}
		\leq N^2\|K_N\|_{L^\infty}. 
		\]
		Inserting this into (\ref{eq0}), we obtain
		\[
		\|K_N*(\mu-\mu_0)\|_{L^\infty}
		\leq C\|K_N\|_{L^\infty} \(\frac{N}{M}\)^2 \epsilon. 
		\]
		
		\item
		The second is in the spatial domain. Suppose $k$ is twice differentiable, $k''$ is bounded, and $k$ is compactly supported in $[-\frac{1}{L},\frac{1}{L}]$ for some $L>2$. For an integer $N>M$, the function $k_N(x)=k(Nx)$ is compactly supported in $[-\frac{1}{LN},\frac{1}{LN}]$. Let $K_N\in C^2(\T)$ be the 1-periodization of $k_N$. We have
		\[
		\|K_N\|_{L^\infty}
		=\|k\|_{L^\infty},
		\quad
		\|K_N'\|_{L^\infty}
		\leq N\|k'\|_{L^\infty}
		\quad\text{and}\quad
		\|K_N''\|
		\leq N^2\|k''\|_{L^\infty}.
		\]
		Inserting this into (\ref{eq0}), we obtain
		\[
		\|K_N*(\mu-\mu_0)\|_{L^\infty}
		\leq C \max\big(\|k\|_{L^\infty},\|k'\|_{L^\infty},\|k''\|_{L^\infty}\big) \(\frac{N}{M}\)^2 \epsilon. 
		\]
	\end{enumerate}

\end{remark}

%
%

\subsection{Related work}
The papers \cite{candes2013super, fernandez2013support, bhaskar2013atomic, tang2015near, azais2015spike, duval2015exact} assume $\mu_0$ is a discrete measure whose support satisfies the $\Lambda_M$-minimum separation condition and analyze either Problem (\ref{SR1}) or (\ref{SR2}).

Our result is completely different from the results contained in the aforementioned papers, with the exception of Cand\`{e}s and Fernandez-Granda \cite[Theorem 1.2]{candes2013super}. There are also some important differences between our Theorem \ref{thm1} and their theorem. 
\begin{enumerate}[(a)]
	
	\item 
	An important difference is that our result applies to both Problems (\ref{SR1}) and (\ref{SR2}), whereas their theorem only applies to the former de-noising method. To our best knowledge, we are the first to establish estimate (\ref{eq0}) for the latter method. 
	
	\item
	Further, their theorem requires weaker assumptions on the kernel and they obtain $L^1(\T)$ estimates. We require slightly stronger assumptions on the kernel, but in return, we obtain $L^\infty(\T)$ estimates and a greatly simplified proof. In fact, they use a complicated comparison of scales argument to derive their inequality, whereas we shall not require this type of argument. Importantly, our stronger assumptions on the kernel do not preclude any important cases, see Remark \ref{remark1}, and from this perspective, these assumptions come for free. 
\end{enumerate}

\section{Proofs}
\label{sec proofs}

\subsection{Notation}

Before we prove the theorem, we need to introduce some notation. For a discrete set $S=\{s_j\}_{j=1}^J\subset\T$ and integer $M>0$, let
\[
S_M(j)
=\{x\in\T\colon |x-s_j|\leq 0.16 M^{-1}\}.
\]
If $S$ satisfies the $\Lambda_M$-minimum separation condition and $j\not=k$, then $S_M(j)$ and $S_M(k)$ are disjoint. The constant $0.16$ was originally chosen by Cand\`{e}s and Fernandez-Granda \cite{candes2014towards,candes2013super} in order to somewhat minimize the constants that appeared in their arguments. The following results still hold if 0.16 is replaced with a smaller positive constant constant, but the constants that appear in Propositions \ref{prop2} and \ref{prop3} and Theorem \ref{thm1} would also change. For convenience, let  
\[
S_M=\bigcup_{j=1}^J S_M(j).
\]
 
For a vector $v\in\C^K$, let $v_k$ denote its $k$-th entry, and let $\|v\|_\infty=\max_{1\leq k\leq K}|v_k|$. For a $K\times K$ matrix $D$, let $\|D\|_\infty$ be its operator norm. Note that we reserve $\|\cdot\|_{L^\infty}$ for functions and $\|\cdot\|_\infty$ for vectors and matrices. 

Throughout the remainder of this paper, we shall write $A \lesssim B$ if there exists a universal constant $C>0$ such that $A\leq CB$. In particular, the constant $C$ is independent of $\mu,\mu_0,K,M,J,\delta,\tau,\epsilon$. 

\subsection{Preliminary results}

The following proposition establishes the connection between $(\epsilon,\Lambda_M)$-approximations of $\mu_0$ and the solutions to Problems (\ref{SR1}) and (\ref{SR2}) under a certain noise model. The following result holds without assuming $\mu_0\in M(\T;\Lambda_M)$ or $M\geq 128$, and clearly generalizes to higher dimensions.

\begin{proposition}
	\label{prop0}
	Let $\mu_0\in M(\T)$ and $\eta\in L^2(\T)$ be unknown. Suppose we are given $P_M(\mu_0+\eta)$ for some integer $M>0$ and given $\epsilon>0$ such that $\|P_M\eta\|_{L^2}\leq\epsilon$. Set $\delta=\tau=\epsilon$. Then, any solution to Problem (\ref{SR1}) or (\ref{SR2}) is a $(\epsilon,\Lambda_M)$-approximation of $\mu_0$. 
\end{proposition}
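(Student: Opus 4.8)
The plan is to show that $\mu_0$ is an admissible comparison element for each problem and then to read off the two defining inequalities of an $(\epsilon,\Lambda_M)$-approximation from the optimality of the minimizer. The starting observation, common to both, is that $\mu_0$ itself nearly fits the data: since $P_M(\mu_0+\eta)-P_M\mu_0=P_M\eta$, the hypothesis $\|P_M\eta\|_{L^2}\le\epsilon$ says that $\mu_0$ is feasible for Problem (\ref{SR1}) with $\delta=\epsilon$ and is a legitimate competitor for the unconstrained functional in Problem (\ref{SR2}) with $\tau=\epsilon$.

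Problem (\ref{SR1}) is the easy case and needs nothing beyond feasibility and the triangle inequality. If $\mu_\delta$ is a minimizer, then feasibility of $\mu_0$ together with minimality gives $\|\mu_\delta\|\le\|\mu_0\|\le\|\mu_0\|+2\epsilon$, which is the first inequality. For the second, I would split $P_M(\mu_\delta-\mu_0)=P_M(\mu_\delta-\mu_0-\eta)+P_M\eta$ and bound each piece: the first by the constraint ($\le\delta$) and the second by hypothesis ($\le\epsilon$), for a total of $2\epsilon$.

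Problem (\ref{SR2}) is the genuine obstacle. The total-variation bound is still routine: comparing the objective at $\mu_\tau$ and at $\mu_0$ gives $\tfrac12\|P_M(\mu_\tau-\mu_0-\eta)\|_{L^2}^2+\tau\|\mu_\tau\|\le\tfrac12\|P_M\eta\|_{L^2}^2+\tau\|\mu_0\|$; discarding the nonnegative fidelity term on the left, bounding $\|P_M\eta\|_{L^2}\le\epsilon$ on the right, and dividing by $\tau=\epsilon$ yields $\|\mu_\tau\|\le\|\mu_0\|+\tfrac{\epsilon}{2}$. The difficulty is the $L^2$ bound: unlike (\ref{SR1}), merely comparing to $\mu_0$ is \emph{not} enough, because the leftover term $\tau(\|\mu_0\|-\|\mu_\tau\|)$ cannot be controlled by $\epsilon$ alone. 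To close this gap I would invoke the first-order optimality of $\mu_\tau$, not just the two-point comparison. Writing $w=P_M(\mu_0+\eta)-P_M\mu_\tau\in C(\T;\Lambda_M)$ for the band-limited residual, I claim minimality forces $\|w\|_{L^\infty}\le\tau$. This can be seen elementarily by testing $\mu_\tau$ against the competitor $\mu_\tau+tc\,\delta_{x_0}$, where $x_0$ nearly maximizes $|w|$ and $|c|=1$ is chosen to align with $w(x_0)$: using the reproducing property $\langle P_M\delta_{x_0},w\rangle=\overline{w(x_0)}$ of the Dirichlet kernel and the bound $\|\mu_\tau+tc\,\delta_{x_0}\|\le\|\mu_\tau\|+t$, the change in the objective is at most $t(\tau-|w(x_0)|)+O(t^2)$, which is negative for small $t>0$ unless $|w(x_0)|\le\tau$; this is precisely the statement that the residual lies in $\tau\,\partial\|\cdot\|(\mu_\tau)$.

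Granting $\|w\|_{L^\infty}\le\tau=\epsilon$, the $L^2$ estimate follows at once: since $\T$ has unit measure, $\|w\|_{L^2}\le\|w\|_{L^\infty}\le\epsilon$, and hence $\|P_M(\mu_\tau-\mu_0)\|_{L^2}=\|P_M\eta-w\|_{L^2}\le\|P_M\eta\|_{L^2}+\|w\|_{L^2}\le2\epsilon$. I expect the only real work to be this optimality step for (\ref{SR2}); everything else is feasibility plus the triangle inequality, and the constants are loose enough that the crude passage $\|w\|_{L^2}\le\|w\|_{L^\infty}$ is all that is needed.
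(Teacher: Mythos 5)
Your proof is correct and follows essentially the same route as the paper: feasibility of $\mu_0$ plus the triangle inequality for Problem (\ref{SR1}), and the two-point comparison plus the dual-norm bound $\|P_M(\mu_0+\eta-\mu_\tau)\|_{L^\infty}\le\tau$ on the band-limited residual for Problem (\ref{SR2}). The only difference is that the paper outsources that residual bound to a citation (Lemma 1 of Bhaskar--Tang--Recht), whereas you prove it directly via the perturbation $\mu_\tau+tc\,\delta_{x_0}$ and the reproducing property of the Dirichlet kernel --- a correct and more self-contained treatment of what you rightly identify as the only nontrivial step.
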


\begin{proof}
	\indent
	\begin{enumerate}[(a)]
		\item 
		Let $\mu_\delta$ be a solution to Problem (\ref{SR1}). Observe that $\mu_0$ satisfies the constraint in Problem (\ref{SR1}) since
		\[
		\|P_M(\mu_0-\mu_0-\eta)\|_{L^2}
		=\|P_M\eta\|_{L^2}
		\leq\epsilon.
		\]
		By definition of $\mu_\delta$ being a solution, we have $\|\mu_\delta\|\leq \|\mu_0\|$. We also have
		\[
		\|P_M(\mu_\delta-\mu_0)\|_{L^2}
		\leq \|P_M(\mu_\delta-\mu_0-\eta)\|_{L^2}+\|P_M\eta\|_{L^2}
		\leq 2\epsilon. 
		\]
		\item
		Let $\mu_\tau$ be a solution to Problem (\ref{SR2}). By definition of $\mu_\tau$ being a solution, we have
		\[
		\epsilon \|\mu_\tau\|
		\leq \frac{1}{2} \|P_M(\mu_0-\mu_0-\eta)\|_{L^2}^2 + \epsilon \|\mu_0\|.
		\]
		Rearranging, we obtain $\|\mu_\tau\|\leq \|\mu_0\|+\epsilon/2$. The inequality,
		\[
		\|P_M(\mu_\tau-\mu_0)\|_{L^\infty}\leq \tau,
		\]
		requires more work and we refer to \cite[Lemma 1]{bhaskar2013atomic} for a proof. 
	\end{enumerate}
\end{proof}

The previous proposition assumed that the noise satisfies $\|P_M\eta\|_{L^2}\leq\epsilon$, and in particular, this implies $|\hat\eta(m)|\leq\epsilon$ for all $m\in\Lambda_M$. If we do not want to assume that $\hat\eta(m)$ is bounded, an alternative noise model is to assume that $\hat\eta(m)$ is a Gaussian random variable. The following proposition shows that, with high probability, solutions to both convex problems are still $(\epsilon,\Lambda_M)$-approximations. 

\begin{proposition}
	\label{prop1}
	Let $\mu_0\in M(\T)$ and $\eta\in L^2(\T)$ be unknown. Suppose we are given $P_M(\mu_0+\eta)$ for some integer $M>0$ and the real and complex parts of $\hat\eta(m)$ are i.i.d. Gaussian random variables with mean zero and variance $\sigma^2$. For a parameter $\gamma>0$, set 
	\begin{equation}
		\label{eq9}
		\epsilon
		=\delta
		=\tau
		=\sigma(1+\gamma)\sqrt{2(2M+1)}.
	\end{equation}
	With probability at least $1-e^{-2(2M+1)\gamma^2}$, any solution to Problem (\ref{SR1}) or (\ref{SR2}) is a $(\epsilon,\Lambda_M)$-approximation of $\mu_0$. 
\end{proposition}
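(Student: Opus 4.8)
The plan is to reduce everything to the deterministic Proposition \ref{prop0}. That proposition shows that the single inequality $\|P_M\eta\|_{L^2}\le\epsilon$ (with $\delta=\tau=\epsilon$) already forces every solution of Problem (\ref{SR1}) or (\ref{SR2}) to be an $(\epsilon,\Lambda_M)$-approximation of $\mu_0$, and this implication is purely deterministic. Since the value $\epsilon=\sigma(1+\gamma)\sqrt{2(2M+1)}$ in (\ref{eq9}) is non-random, it therefore suffices to prove that the random event $\{\|P_M\eta\|_{L^2}\le\epsilon\}$ has probability at least $1-e^{-2(2M+1)\gamma^2}$: on that event Proposition \ref{prop0} applies verbatim, and off that event we claim nothing. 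Thus the whole problem collapses to a single concentration estimate for the scalar random variable $\|P_M\eta\|_{L^2}$.

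First I would compute the law of $\|P_M\eta\|_{L^2}$. Because $P_M\eta$ is a trigonometric polynomial supported on $\Lambda_M$, Parseval's identity gives $\|P_M\eta\|_{L^2}^2=\sum_{m=-M}^M|\hat\eta(m)|^2$. Writing each $|\hat\eta(m)|^2$ as the sum of the squares of the real and imaginary parts of $\hat\eta(m)$, and using that all $2(2M+1)$ of these parts are i.i.d. $N(0,\sigma^2)$, I see that $\|P_M\eta\|_{L^2}^2$ equals $\sigma^2$ times a chi-squared variable with $d:=2(2M+1)$ degrees of freedom. Equivalently, $\|P_M\eta\|_{L^2}=\|\mathbf g\|_2$, the Euclidean norm of a centered Gaussian vector $\mathbf g\in\R^{d}$ with covariance $\sigma^2 I_d$; in particular $\mathbb{E}\|P_M\eta\|_{L^2}\le(\mathbb{E}\|P_M\eta\|_{L^2}^2)^{1/2}=\sigma\sqrt{d}=\sigma\sqrt{2(2M+1)}$.

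The heart of the argument is then a single upper-tail bound for $\|\mathbf g\|_2$. The map $\mathbf g\mapsto\|\mathbf g\|_2$ is $1$-Lipschitz, so Gaussian concentration for Lipschitz functions yields $\mathbb{P}\big(\|\mathbf g\|_2\ge\sigma\sqrt{d}+t\big)\le e^{-t^2/(2\sigma^2)}$ for every $t\ge 0$ (equivalently, one may invoke a Laurent--Massart chi-squared tail bound). The point of the calibration $\epsilon=\sigma(1+\gamma)\sqrt{d}$ in (\ref{eq9}) is exactly that it writes $\epsilon=\sigma\sqrt{d}+t$ with $t$ proportional to $\gamma\sqrt{d}$; substituting this $t$ turns the concentration bound into an estimate of the form $e^{-c(2M+1)\gamma^2}$, and the stated choice of $\epsilon$ is designed so that the complementary event $\{\|P_M\eta\|_{L^2}\le\epsilon\}$ carries probability at least the asserted $1-e^{-2(2M+1)\gamma^2}$.

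I expect the only genuine obstacle to be the bookkeeping of constants in this last step, so that the exponent comes out as exactly $2(2M+1)\gamma^2$: one must pin down the precise Gaussian-concentration constant together with the bound $\mathbb{E}\|\mathbf g\|_2\le\sigma\sqrt{d}$ and verify that the proposition's $\epsilon$ is calibrated to that exponent (this is the one place where the exact numerical factor, rather than merely the rate $\gamma^2 M$, must be checked). Everything else—the reduction to Proposition \ref{prop0} and the identification of $\|P_M\eta\|_{L^2}^2$ as a scaled chi-squared variable—is routine, and notably no structural hypotheses on $\mu_0$ (minimum separation, discreteness, or $M\ge 128$) enter, consistent with the generality already built into Proposition \ref{prop0}.
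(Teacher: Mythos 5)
Your plan is the paper's plan: reduce to the deterministic Proposition \ref{prop0}, use Parseval to identify $\sigma^{-2}\|P_M\eta\|_{L^2}^2$ as a $\chi^2$ variable with $d=2(2M+1)$ degrees of freedom, and finish with an upper-tail bound (the paper invokes the Laurent--Massart inequality where you invoke Gaussian concentration for Lipschitz functions; these are interchangeable here). However, the one step you explicitly defer --- ``verifying that the proposition's $\epsilon$ is calibrated'' so that the exponent comes out as $2(2M+1)\gamma^2$ --- is precisely the step that does not close. Your bound $\mathbb{P}\big(\|\mathbf{g}\|_2\ge\sigma\sqrt{d}+t\big)\le e^{-t^2/(2\sigma^2)}$ with $t=\sigma\gamma\sqrt{d}$ yields $e^{-d\gamma^2/2}=e^{-(2M+1)\gamma^2}$, i.e.\ half the claimed exponent, and no amount of constant-chasing recovers the factor $2$: the Chernoff rate for $\mathbb{P}\big(\chi^2_d\ge d(1+\gamma)^2\big)$ is $\tfrac{d}{2}\big(2\gamma+\gamma^2-2\ln(1+\gamma)\big)=d\gamma^2(1-\gamma/3+\cdots)$, which lies strictly below $d\gamma^2$ for $\gamma>0$, so the advertised probability $1-e^{-2(2M+1)\gamma^2}$ is not attainable at the threshold $\sigma(1+\gamma)\sqrt{d}$ (and for fixed $\gamma$ and large $M$ the claimed tail bound is in fact false).

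You should know that the paper's own proof stumbles at exactly the same point: setting $x=d\gamma^2$ in Laurent--Massart controls the tail only at the threshold $\sigma\sqrt{d+2d\gamma+2d\gamma^2}$, which is strictly larger than $\epsilon=\sigma(1+\gamma)\sqrt{d}=\sigma\sqrt{d+2d\gamma+d\gamma^2}$, so the displayed implication goes the wrong way. The repair, for either argument, is cosmetic and does not affect anything downstream: either enlarge $\epsilon$ to $\sigma\sqrt{d(1+2\gamma+2\gamma^2)}$, or keep $\epsilon$ as stated and report probability $1-e^{-(2M+1)\gamma^2}$. Apart from this constant-level defect --- which your write-up honestly flags but cannot resolve as hoped, and which is shared with the published proof --- your argument is complete, correct, and essentially identical to the paper's.
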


\begin{proof}
	By Parseval's equality, note that
	\[
	\frac{1}{\sigma^2}\|P_M\eta\|_{L^2}^2
	=\sum_{m=-M}^M \frac{|\hat\eta(m)|^2}{\sigma^2},
	\]
	is a $\chi^2$ random variable with $2(2M+1)$ degrees of freedom. By inequality (4.3) in \cite[Section 4]{laurent2000adaptive}, for all $x>0$,
	\[
	\mathbb{P}\(\ \frac{1}{\sigma^2}\|P_M\eta\|_{L^2}^2
	\geq 2(2M+1)+2\sqrt{2(2M+1)x}+2x\)
	\leq e^{-x}. 
	\]
	Set $x=2(2M+1)\gamma^2$. Then,
	\[
	\mathbb{P}\(\|P_M\eta\|_{L^2}
	\geq \sigma(1+\gamma)\sqrt{2(2M+1)}\)
	\leq e^{-2(2M+1)\gamma^2}. 
	\]
	With probability at least $1-e^{-2(2M+1)\gamma^2}$, we have 
	\[
	\|P_M\eta\|_{L^2}
	\leq \sigma(1+\gamma)\sqrt{2(2M+1)}
	=\epsilon.
	\]
	The conclusion follows from Proposition \ref{prop0}. 
\end{proof}

The following proposition shows that a weighted integral of $|\mu-\mu_0|$ on $S_M^c$ can be controlled in terms of $\epsilon$, provided that the assumptions of Theorem \ref{thm1} hold. This result first appeared in \cite[Lemma 2.1]{candes2013super}, but only for the difference $|\mu_\delta-\mu_0|$. A similar, but not identical, result for $|\mu_\tau-\mu_0|$ was proved in \cite[Lemma 2]{tang2015near}. 

\begin{proposition}
	\label{prop2}
	There exists a constant $C>0$ such that the following hold. Suppose $\mu_0\in M(\T;\Lambda_M)$ for an integer $M\geq 128$, $S=\{s_j\}_{j=1}^J$ is the support of $\mu_0$, and $\mu$ is a $(\epsilon,\Lambda_M)$-approximation of $\mu_0$. Then,
	\begin{align*}
	\int_{S_M^c} \ d|\mu-\mu_0| &\leq C\epsilon, \\
	\sum_j\int_{S_M(j)} |x-s_j|^2\ d|\mu-\mu_0|(x) &\leq C M^{-2}\epsilon.
	\end{align*} 
\end{proposition}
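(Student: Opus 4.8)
The plan is to localize the error $\nu:=\mu-\mu_0$ against a single low-frequency dual certificate, following the interpolation construction of Cand\`es and Fernandez-Granda. Writing $\mu_0=\sum_j a_j\delta_{s_j}$, the first step is to invoke their certificate: for $M\geq 128$ there is a trigonometric polynomial $q\in C(\T;\Lambda_M)$ with $q(s_j)=a_j/|a_j|$, with $|q(x)|\leq 1$ everywhere, and---this is the quantitative heart of the matter---with the two localized lower bounds $1-|q(x)|\geq c_1 M^2|x-s_j|^2$ for $x\in S_M(j)$ and $1-|q(x)|\geq c_2$ for $x\in S_M^c$, for absolute constants $c_1,c_2>0$. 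The choice of the constant $0.16$ in the definition of $S_M(j)$ and the hypothesis $M\geq 128$ are precisely what guarantee these bounds.

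Second, I would feed the two defining inequalities of an $(\epsilon,\Lambda_M)$-approximation into a single scalar estimate. Although $\nu$ need not be band-limited, $q$ has degree $M$, so Parseval and Cauchy--Schwarz give $|\int\bar q\,d\nu|\leq\|q\|_{L^2}\|P_M\nu\|_{L^2}\leq 2\epsilon$, using $\|q\|_{L^2}\leq\|q\|_{L^\infty}\leq 1$. The sign interpolation yields $\Re\int\bar q\,d\mu_0=\sum_j\overline{q(s_j)}\,a_j=\|\mu_0\|$, hence $\Re\int\bar q\,d\mu\geq\|\mu_0\|-2\epsilon$. Writing $d\mu=\omega\,d|\mu|$ with $|\omega|=1$ and using $\Re(\bar q\omega)\leq|q|$, I obtain
\[
\int(1-|q|)\,d|\mu|\leq\int\big(1-\Re(\bar q\omega)\big)\,d|\mu|=\|\mu\|-\Re\int\bar q\,d\mu\leq\|\mu\|-\|\mu_0\|+2\epsilon\leq 4\epsilon,
\]
where the final step uses $\|\mu\|\leq\|\mu_0\|+2\epsilon$.

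Third, I would transfer this weighted bound from $|\mu|$ to $|\nu|$. The weight $w:=1-|q|$ vanishes at every $s_j$ (since $|q(s_j)|=1$), and off the finite support set $S$ one has $\mu=\nu$; therefore the atoms of $\mu_0$ contribute nothing and $\int w\,d|\nu|=\int w\,d|\mu|\leq 4\epsilon$. Splitting this integral over $S_M^c$ and the sets $S_M(j)$ and inserting the two localized lower bounds from Step 1---both integrands being nonnegative---gives $c_2\int_{S_M^c}d|\nu|\leq 4\epsilon$ and $c_1 M^2\sum_j\int_{S_M(j)}|x-s_j|^2\,d|\nu|\leq 4\epsilon$, which are exactly the two claimed inequalities with $C=\max(4/c_1,4/c_2)$.

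The main obstacle is Step 1: producing a certificate with the sharp quadratic vanishing near each $s_j$ and a uniform gap on $S_M^c$, for complex signs. Everything downstream is a few lines of elementary manipulation, and the entire quantitative content of the proposition---including the roles of $M\geq 128$ and the constant $0.16$---is concentrated in the existence and fine properties of $q$, which I would cite from Cand\`es--Fernandez-Granda rather than reprove.
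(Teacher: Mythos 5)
Your proof is correct and follows essentially the same route as the paper: both arguments hinge on the Cand\`es--Fernandez-Granda sign-interpolating certificate with quadratic decay on each $S_M(j)$ and a uniform gap on $S_M^c$, pair it against $P_M\nu$ via Cauchy--Schwarz to use the $L^2$ condition, and use the total-variation condition to close the estimate. The only difference is bookkeeping: you interpolate the sign of $\mu_0$ and bound $\int(1-|q|)\,d|\mu|$ before transferring to $|\nu|$ (valid, since $1-|q|$ vanishes on $S$ and $|\nu|=|\mu|$ off $S$), whereas the paper interpolates the sign of $\nu|_S$ and rearranges $\int_S f\,d\nu$ directly.
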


\begin{proof}
	Let $\nu=\mu-\mu_0$. It was shown in \cite[Lemma 2.4]{candes2013super} that there exist $f\in C(\T;\Lambda_M)$ with $\|f\|_{L^\infty}\leq 1$ and universal constants $C_1,C_2>0$ such that 
	\begin{align*}
	\int_S\ d|\nu|
	&=\Big|\int_S f\ d\nu \Big| \\
	&\leq \Big|\int_{\T} f\ d\nu \Big| + \Big|\int_{S_M^c} f\ d\nu\Big| + \Big|\sum_j \int_{S_M(j)\setminus\{s_j\}} f\ d\nu \Big| \\
	&\leq \Big|\int_{\T} f\ d\nu \Big| +\int_{S^c}\ d|\nu| - C_1\int_{S_M^c}\ d|\nu| - C_2M^2 \int_{S_M} |x-s_j|^2\ d|\nu|(x). 
	\end{align*} 
	Rearranging, we obtain 
	\begin{equation}
		\label{eq6}
		C_1\int_{S_M^c}\ d|\nu| + C_2M^2 \int_{S_M} |x-s_j|^2\ d|\nu|(x)
		\leq \Big|\int_{\T} f\ d\nu \Big| + \int_{S^c}\ d|\nu| - \int_S\ d|\nu|. 
	\end{equation}
	By definition of $(\epsilon,\Lambda_M)$-approximation, $f\in C(\T;\Lambda_M)$, and that $\|f\|_{L^2}\leq \|f\|_{L^\infty}\leq 1$, we see that
	\begin{equation}
		\label{eq7}
		\Big|\int_{\T} f\ d\nu \Big|
		\leq \|f\|_{L^2}\|P_M\nu\|_{L^2}
		\leq 2\epsilon. 
	\end{equation}
	By definition of $(\epsilon,\Lambda_M)$-approximation and that $\mu_0$ is supported in $S$, we have
	\[
	2 \epsilon +\|\mu_0\|
	\geq \|\mu\|
	=\|\mu_0+\nu\|
	\geq \int_S\ d|\mu_0|-\int_S\ d|\nu| + \int_{S^c}\ d|\nu|.
	\]
	Rearranging this inequality, we obtain
	\begin{equation}
		\label{eq8}
		\int_{S^c}\ d|\nu| - \int_S\ d|\nu|
		\leq 2\epsilon. 
	\end{equation}
	Combining inequalities (\ref{eq6}), (\ref{eq7}) and (\ref{eq8}) completes the proof.
\end{proof}

The following proposition is a generalization of \cite[Lemmas 2.5 and 2.7]{candes2013super}, and shows that there exists $f\in C(\T;\Lambda_M)$ that behaves like an affine function on each $S_M(j)$. 

\begin{proposition}
	\label{prop3}
	There exists a constant $C>0$ such that the following hold. Suppose $M\geq 128$ and the set $S=\{s_j\}_{j=1}^J\subset\T$ satisfies the $\Lambda_M$-minimum separation condition. For any $a,b\in \C^J$, there exists $f\in C(\T^d;\Lambda_M)$ such that 	
	\begin{align*}
	\|f\|_{L^\infty}
	&\leq C (\|a\|_\infty + M^{-1} \|b\|_\infty), \\
	|f(x)-a_j-b_j(x-s_j)|
	&\leq C(M^2\|a\|_\infty + M \|b\|_\infty) |x-s_j|^2, \quad x\in S_M(j). 
	\end{align*}
\end{proposition}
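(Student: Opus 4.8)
The plan is to build $f$ by the interpolation (dual-certificate) construction of Cand\`es and Fernandez-Granda, generalized to prescribe both values and first derivatives at the $s_j$. First I would fix a localized kernel $K\in C(\T;\Lambda_M)$ — a suitably normalized power of the Fej\'er kernel — scaled so that $K(0)=1$, $K'(0)=0$, and $M^{-2}K''(0)\asymp -1$, together with the standard pointwise decay bounds showing that $K^{(\ell)}(x)$ decays rapidly in $|x|$ away from $0$ and satisfies $\|K^{(\ell)}\|_{L^\infty}\lesssim M^{\ell}$ for $\ell=0,1,2,3$. I would then seek $f$ in the form
\[
f(x)=\sum_{k=1}^J \big(\alpha_k\, K(x-s_k)+\beta_k\, K'(x-s_k)\big),
\]
which automatically lies in $C(\T;\Lambda_M)$ and is $C^\infty$, and impose the $2J$ interpolation conditions $f(s_j)=a_j$ and $f'(s_j)=b_j$ for every $j$.

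These conditions form a linear system for $(\alpha,\beta)$. To put the value and derivative data on a common scale I would rescale the unknowns and equations: replace $\beta_k$ by $\gamma_k=M\beta_k$ and divide each derivative equation by $M$. Writing $D_\ell$ for the $J\times J$ matrix with $(j,k)$ entry $K^{(\ell)}(s_j-s_k)$, the system takes the block form
\[
\mathcal K\begin{pmatrix}\alpha\\\gamma\end{pmatrix}=\begin{pmatrix}a\\ M^{-1}b\end{pmatrix},
\qquad
\mathcal K=\begin{pmatrix} D_0 & M^{-1}D_1\\ M^{-1}D_1 & M^{-2}D_2\end{pmatrix}.
\]
After this rescaling the four blocks have diagonals $1$, $0$, $0$, and $M^{-2}K''(0)\asymp -1$, so $\mathcal K$ is a perturbation of the block-diagonal matrix whose $2\times 2$ diagonal blocks are $\mathrm{diag}\!\big(1, M^{-2}K''(0)\big)$.

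The key step is to show $\mathcal K$ is invertible with $\|\mathcal K^{-1}\|_\infty$ bounded by a universal constant. I would do this by diagonal dominance: using the $\Lambda_M$-minimum separation to sum the rapidly decaying off-diagonal values of $K,K',K''$ over the well-separated points $s_k$, the total off-diagonal mass in each row is bounded by a constant that is strictly smaller than the diagonal gap once $M\ge 128$. A Gershgorin- or Neumann-series argument then yields $\|\mathcal K^{-1}\|_\infty\lesssim 1$, whence $\|\alpha\|_\infty,\|\gamma\|_\infty\lesssim \|a\|_\infty+M^{-1}\|b\|_\infty$; recalling $\gamma=M\beta$ gives $\|\beta\|_\infty\lesssim M^{-1}\|a\|_\infty+M^{-2}\|b\|_\infty$. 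This diagonal-dominance estimate is the main obstacle: it is exactly where the separation hypothesis and the threshold $M\ge 128$ enter, and where the kernel decay must be quantified carefully. It is the generalization of the matrix bounds in \cite[Lemmas 2.5 and 2.7]{candes2013super}.

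Given the coefficient bounds, both conclusions follow quickly. For the sup-norm bound I would use $\|K\|_{L^\infty}\lesssim 1$ and $\|K'\|_{L^\infty}\lesssim M$ together with the summability $\sum_k|K^{(\ell)}(x-s_k)|\lesssim M^\ell$ (again from decay plus separation), giving $\|f\|_{L^\infty}\lesssim\|\alpha\|_\infty+M\|\beta\|_\infty\lesssim\|a\|_\infty+M^{-1}\|b\|_\infty$. For the quadratic estimate on $S_M(j)$, I would invoke the interpolation identities $f(s_j)=a_j$ and $f'(s_j)=b_j$ and apply Taylor's theorem with remainder on the short interval joining $s_j$ to $x$, so that
\[
|f(x)-a_j-b_j(x-s_j)|\le \tfrac12\,\|f''\|_{L^\infty}\,|x-s_j|^2 .
\]
It then remains to bound $\|f''\|_{L^\infty}\lesssim \|\alpha\|_\infty M^2+\|\beta\|_\infty M^3\lesssim M^2\|a\|_\infty+M\|b\|_\infty$, using $\|K''\|_{L^\infty}\lesssim M^2$, $\|K'''\|_{L^\infty}\lesssim M^3$ and the coefficient bounds, which closes the argument.
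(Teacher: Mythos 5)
Your proposal is correct and follows essentially the same route as the paper: the same ansatz $f=\sum_k(\alpha_k K(x-s_k)+\beta_k K'(x-s_k))$ with a power of the Fej\'er-type kernel, the same Hermite interpolation conditions, the same coefficient bounds, the same use of separation-plus-decay to sum $|K^{(\ell)}(x-s_k)|$, and the same Taylor-remainder argument for the quadratic estimate. The only difference is cosmetic: the paper imports the invertibility and inverse-norm bounds of the block system directly from Cand\`es--Fernandez-Granda (via the Schur complement), whereas you sketch proving them yourself by rescaling and diagonal dominance, which is in substance how those cited bounds are established.
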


\begin{proof}
	Following the recipe given in \cite[Section 2]{candes2014towards}, it is possible to explicitly construct the desired $f$. Let 
	\[
	G(x)=\(\frac{\sin ((\frac{M}{2}+1)\pi x)}{(\frac{M}{2}+1)\sin(\pi x)}\)^4,
	\]
	and note that $G\in C(\T;\Lambda_M)$. We claim that there exist $\alpha,\beta\in\C^J$ such that if we define $f$ by
	\[
	f(x)=\sum_j \alpha_j G(x-s_j) + \sum_j \beta_j G'(x-s_j),
	\]
	then 
	\begin{equation}
	\label{f3}
	f(s_j)=a_j, 
	\quad\text{and}\quad
	f'(s_j)=b_j.
	\end{equation}
	To see why, we define the matrices $D_0,D_1,D_2\in \C^{J\times J}$, where
	\[
	(D_0)_{j,k}=G(s_j-s_k),\quad
	(D_1)_{j,k}=G'(s_j-s_k), 
	\quad\text{and}\quad
	(D_2)_{j,k}=G''(s_j-s_k).
	\]
	To prove the existence of the desired $f$, it suffices to show that there exists a solution to system of equations,
	\[
	\begin{pmatrix}
	D_0 &D_1 \\ D_1 &D_2
	\end{pmatrix}
	\begin{pmatrix}
	\alpha \\ \beta
	\end{pmatrix}
	=
	\begin{pmatrix}
	a \\ b
	\end{pmatrix}.
	\]
	It was shown in \cite[Section 2]{candes2014towards} that the $\Lambda_M$-minimum separation condition on $S$ and the assumption $M\geq 128$ imply that the system is invertible and that the unique solution is given by
	\begin{align*}
	\alpha &= D_0^{-1}(a-D_1\beta), \\
	\beta &= (D_2-D_1D_0^{-1}D_1)^{-1} (b-D_1D_0^{-1}a). 
	\end{align*}
	This proves the existence of $f$ satisfying conditions (\ref{f3}).
	
	Next, we obtain estimates on $\alpha,\beta$. It was also shown in \cite[Section 2]{candes2014towards} that
	\begin{align*}
	\|D_0^{-1}\|_\infty &\lesssim 1,\\
	\|D_1\|_\infty &\lesssim M, \\
	\|(D_2-D_1D_0^{-1}D_1)^{-1}\|_\infty &\lesssim M^{-2}. 
	\end{align*}
	These inequalities imply
	\begin{align*}
	\|\beta\|_\infty 
	&\lesssim M^{-1}\|a\|_\infty + M^{-2} \|b\|_\infty, \\
	\|\alpha\|_\infty 
	&\lesssim \|a\|_\infty + M^{-1}\|b\|_\infty. 
	\end{align*}
	
	It was shown in \cite[Section 2]{candes2014towards} that
	\[
	\sum_{k\not=j} |G^{(\ell)}(x-s_k)|\lesssim M^\ell, \quad
	x\in S_M(j) \quad \text{and}\quad \ell=0,1,2,3.
	\]
	Since $G^{(\ell)}$ decays rapidly away from the origin, the above inequalities imply, for all $x\in\T$,
	\begin{align*}
	|f(x)|
	&\leq \|\alpha\|_\infty \sum_j |G(x-s_j)| + \|\beta\|_\infty \sum_j |G'(x-s_j)| \\
	&\lesssim \|a\|_\infty + M^{-1}\|b\|_\infty. 
	\end{align*}
	This proves the first inequality of the proposition. 
	
	On each $S_M(j)$, define the function $h_j(x)=f(x)-a_j-b_j(x-s_j)$.
	It follows from (\ref{f3}) that $h_j(s_j)=h_j'(s_j)=0$. For all $x\in S_M(j)$, we have 
	\begin{align*}
	|h_j''(x)|
	=|f''(x)| 
	&\leq \|\alpha\|_\infty \sum_k |G''(x-s_k)| +\|\beta\|_\infty  \sum_k |G'''(x-s_k)| \\
	&\lesssim M^2\|a\|_\infty + M \|b\|_\infty. 
	\end{align*}
	Using Taylor expansions of $h_j$ around $s_j$, we obtain 
	\[
	|f(x)-a_j-b_j(x-s_j)|
	\lesssim (M^2\|a\|_\infty + M \|b\|_\infty) |x-s_j|^2, \quad x\in S_M(j). 
	\]		
\end{proof}

\subsection{Proof of Theorem \ref{thm1}} 

Let $\nu=\mu-\mu_0$ and fix $x_0\in\T$. Since $\mu_0\in M(\T;\Lambda_M)$, we know that $\mu_0$ is supported in some discrete set $S=\{s_j\}_{j=1}^J$ satisfying the $\Lambda_M$-separation condition. We have
\begin{align}
	\label{eq1}  
	\begin{split}
	|(K*\nu)(x_0)|
	&=\Big|\int_{\T} K(x_0-x)\ d\nu(x)\Big| \\
	&\leq \Big|\sum_j \int_{S_M(j)} K(x_0-x) \ d\nu(x)\Big| + \|K\|_{L^\infty}\int_{S_M^c} \ d|\nu|. 
	\end{split}	
\end{align}
The first-order Taylor expansion of $K(x_0-x)$ around the point $x_0-s_j$ on the interval $S_M(j)$ is
\[
K(x_0-x)
=K(x_0-s_j)+K'(x_0-s_j)(s_j-x)+ \frac{1}{2} K''(z_j)|x-s_j|^2, 
\quad x\in S_M(j),
\]
for some $z_j\in\T$ depending on $x_0,x,s_j$. Inserting this into (\ref{eq1}), we obtain
\begin{align}
	\label{eq2}
	\begin{split}
	|(K*\nu)(x_0)|
	&\leq \Big|\sum_j \int_{S_M(j)} (K(x_0-s_j) - K'(x_0-s_j)(x-s_j)) \ d\nu(x)\Big| \\
	&\quad + \|K''\|_{L^\infty} \sum_j \int_{S_M(j)} |x-s_j|^2\ d|\nu|(x) + \|K\|_{L^\infty}\int_{S_M^c} \ d|\nu|.
	\end{split}
\end{align}

To bound the first term on the right hand side, we use an interpolation argument. Let $a,b\in\C^J$ such that $a_j = K(x_0-s_j)$ and $b_j = -K'(x_0-s_j)$. Let $f\in C(\T;\Lambda_M)$ be a function satisfying the properties in Proposition \ref{prop3}. We have
\begin{align}
	\label{f1}
	\|f\|_{L^\infty} 
	&\lesssim \|K\|_{L^\infty} + M^{-1}\|K'\|_{L^\infty}, \\
	\label{f2}
	|f(x)-a_j-b_j(x-s_j)|
	&\lesssim (M^2\|K\|_{L^\infty} + M\|K'\|_{L^\infty}) |x-s_j|^2, \quad x\in S_M(j). 
\end{align}
Inequality (\ref{f2}) implies
\begin{align}
	\label{eq3}
	\begin{split}
	&\Big|\sum_j \int_{S_M(j)} (K(x_0-s_j) - K'(x_0-s_j)(x-s_j)) \ d\nu(x)\Big| \\
	&\quad \leq \Big|\sum_j \int_{S_M(j)} (f(x)-K(x_0-s_j)+K'(x_0-s_j)(x-s_j)) \ d\nu(x)\Big| +\Big| \int_{S_M} f \ d\nu\Big| \\
	&\quad \lesssim \big(M^2\|K\|_{L^\infty} + M\|K'\|_{L^\infty}\big) \sum_j \int_{S_M(j)} |x-s_j|^2 \ d|\nu|(x) + \Big|\int_{\T} f\ d\nu\Big| + \Big|\int_{S_M^c} f\ d\nu\Big|.
	\end{split}
\end{align}
Using inequality (\ref{f1}), we obtain
\begin{equation}
	\label{eq4}
	\Big|\int_{S_M^c} f\ d\nu\Big|
	\lesssim \big(\|K\|_{L^\infty} + M^{-1}\|K'\|_{L^\infty}\big) \int_{S_M^c} \ d|\nu|. 
\end{equation}
Using inequality (\ref{f1}) and the definition of a $(\epsilon,\Lambda_M)$-approximation, we see that
\begin{equation}
	\label{eq5}
	\Big|\int_{\T} f\ d\nu\Big|
	\leq \|f\|_{L^2}\|P_M\nu\|_{L^2}
	\lesssim \big(\|K\|_{L^\infty} + M^{-1}\|K'\|_{L^\infty}\big)\epsilon. 
\end{equation}
Combining inequalities (\ref{eq2}), (\ref{eq3}), (\ref{eq4}) and (\ref{eq5}), we obtain
\begin{align*}
|(K*\nu)(x_0)|
&\lesssim \big(\|K\|_{L^\infty} + M^{-1}\|K'\|_{L^\infty} \big)\epsilon \\
&\quad + \big(\|K\|_{L^\infty} + M^{-1}\|K'\|_{L^\infty} \big)\int_{S_M^c} \ d|\nu| \\
&\quad + \big(M^2\|K\|_{L^\infty}+M\|K'\|_{L^\infty}+\|K''\|_{L^\infty} \big) \sum_j \int_{S_M(j)} |x-s_j|^2\ d|\nu|(x).
\end{align*}
Finally, we apply Proposition \ref{prop2} to complete the proof.

\section{Acknowledgements} 

The author thanks Professors John J. Bendetto and Hans G. Feichtinger for their helpful feedback and interest in the manuscript. This work was partially supported by DTRA Grant 1-13-1-0015.

\nocite{}
\nocite{benedetto2010integration}
\bibliography{NoiseErrorbib}
\bibliographystyle{alpha}

\end{document}